\documentclass[letterpaper]{IEEEtran}

\IEEEoverridecommandlockouts
\usepackage{cite}
\usepackage{amsmath,amssymb,amsfonts}
\usepackage{amsthm}

\newtheorem{lemma}{Lemma}
\newtheorem{theorem}{Theorem}

\theoremstyle{remark}
\newtheorem*{remark*}{Remark} % 星号=无编号

\usepackage{algorithmic}
\usepackage{graphicx}
\usepackage{textcomp}
\usepackage{xcolor}
\newcommand{\rev}[1]{\textcolor{black}{#1}}
\usepackage[caption=false,font=footnotesize]{subfig}

\def\BibTeX{{\rm B\kern-.05em{\sc i\kern-.025em b}\kern-.08em  T\kern-.1667em\lower.7ex\hbox{E}\kern-.125emX}}

\newcommand{\PVtot}{PV^{\text{tot}}_j}  
\begin{document}

\title{\LARGE{ISI Modeling and BER Performance for Rotating Light-Trail Image Sensor Communication}}

%\author{Author 1, Author 2, Author 3, and~Author 4}
\vspace{-4mm}
\author{Shin Asaoka,
        Shan Lu,~\IEEEmembership{Member,~IEEE,}
        Zhengqiang Tang,
        and~Takaya Yamazato,~\IEEEmembership{Senior Member,~IEEE}
        % <-this % stops a space
 \thanks{Received 23 April 2025; accepted 4 May 2025. The associate editor is Dr. Sotiris Tegos. This work was supported by JSPS KAKENHI Grant Number 25K00371 and Telecommunications Advancement Foundation, Japan. (Corresponding author: Shan Lu.)}
\thanks{S. Asaoka, S. Lu and T. Yamazato are with Department of Information and Communication Engineering, Graduate School of Engineering,
 Nagoya University, Nagoya, 464-8601, Japan. (Email: shan.lu.jp@ieee.org.)}
\thanks{Zhengqiang Tang are with Dept of Elec. and Elec. Eng., Shizuoka University, Shizuoka Japan.}
}

% The paper headers
\markboth{IEEE communication letters,~Vol.~, No.~, ~2026}%
{Shell \MakeLowercase{\textit{et al.}}: Bare Demo of IEEEtran.cls for IEEE Communications Society Journals}

\maketitle

\begin{abstract}
Image sensor communication (ISC) employing a propeller-LED transmitter encodes data along rotating light trails. We present an analytical framework that (i) constructs a single-LED, single-blink light trail model that maps optical power to pixel values, and (ii) integrates a probabilistic noise model to derive a closed-form bit-error rate (BER) using the $Q$-function. Trimodal pixel-value histograms motivate an adjacent-only inter-symbol interference (ISI) model in which the decision at segment $j$ depends on adjacent segments. Applying a hardest-pair midpoint threshold yields per-segment BER and a general BER after marginalization. We further provide practical sufficiency conditions under which adjacent-only ISI is adequate, and validate its tightness against Monte Carlo simulations and experimental results. Using the analytical BER, we select the control angle that maximizes throughput while satisfying a target BER reliability constraint.
\end{abstract}
\begin{IEEEkeywords}
Image Sensor Communication (ISC), Optical Camera Communication (OCC), Rotating light trails, Adjacent-only ISI, Q-function, Threshold design, Propeller-LED
\end{IEEEkeywords}

\vspace{-4mm}
\section{Introduction}
Image sensor communication (ISC) is a visible light communication (VLC) technique that uses an image sensor (camera) as the receiver~\cite{komine,wook}. In ISC, the camera records the on–off states of light-emitting diodes (LEDs) as two-dimensional (2-D) image frames~\cite{kamakura}. 
Owing to its spatial parallelism and selectivity, which enable interference rejection via simple image processing~\cite{nagura,arai}, together with low cost and spatial privacy, ISC is attractive for short-range IoT links and screen-to-camera applications. Unmanned aerial vehicles (UAVs, drones) are increasingly deployed for disaster response and military missions, where radio-frequency links may be congested, regulated, or vulnerable to interception; as a result, VLC on drones has attracted growing interest~\cite{motwani,Takano}.

However, with static LEDs and frame-synchronous blinking, each on–off cycle is captured at most once per frame. Consequently, the per-LED symbol rate is fundamentally limited by the camera frame rate, and the per-frame throughput is further constrained by the number of LEDs within the field of view.

Leveraging a quadrotor’s spinning propellers, a propeller-LED transmitter (P-Tx) mounts LEDs on a blade and modulates based on the rotation angle to increase the rate of ISC~\cite{tang2022}. During exposure, the rotating emission integrates into an arcuate “light trail” that acts as a virtual extended source, enabling angle-mapped modulation without additional optics or tight synchronization and thereby increasing throughput. 
By partitioning the $2\pi$ angular range into $J$ sectors and mapping symbols to these angular segments, the P-Tx embeds $J$ parallel pieces of information within a single image frame, achieving high-capacity transmission with only a small number of LEDs~\cite{arai2}.
% \rev{The key advantage of this architecture is that it breaks the traditional frame-rate bottleneck, mapping temporal modulation into spatial angles, thus enabling high-density communication.}
\rev{The key advantage is breaking the frame-rate bottleneck, enabling high-density communication. Thus, we position this as a low-rate, high-robustness method for scenarios like disaster monitoring and RF-restricted environments.}
While P-Tx VLC systems have been validated experimentally, theoretical analysis is scarce. Key parameters likes rotation angle $\Delta\theta = 2\pi/J$ have been chosen empirically to balance throughput and reliability~\cite{tang2022}, leaving BER behavior and design suboptimal.

In this paper, we provide an end-to-end analysis and design framework for P-Tx-based ISC. The main contributions are:

(1) \textbf{Single-LED single-blink model:} We develop a single-LED single-blinking light trail model that maps instantaneous optical transmit power of a single LED to the pixel values through the LOS channel, defocus blur, and the camera. 

(2) \textbf{Adjacent-only ISI modeling and closed-form BER:} Guided by the observed multi-peak pixel-value histograms, we formulate an adjacent-only ISI model for light trail segments. With a \emph{hardest-pair midpoint} decision threshold, we derive a closed-form BER expression in terms of the 
$Q$-function and validate its tightness against Monte Carlo simulations and experiments.

(3) \textbf{Design of the control angle $\Delta\theta$:}
Using the analytic BER, we optimize $\Delta\theta$ under a target-BER (reliability) constraint and demonstrate throughput maximization consistent with optimized $\Delta\theta$.

\vspace{-3mm}
\section{System model: P-Tx-based ISC}\label{sec:systemmodel}
This section presents the P-Tx-based ISC system with light trails, which consists of a P-Tx, an optical channel, and a receiver (camera).
\vspace{-4mm}

\begin{figure}[htbp]
    \centering
    \subfloat[Static P-Tx.]{
        \includegraphics[width=0.15\textwidth]{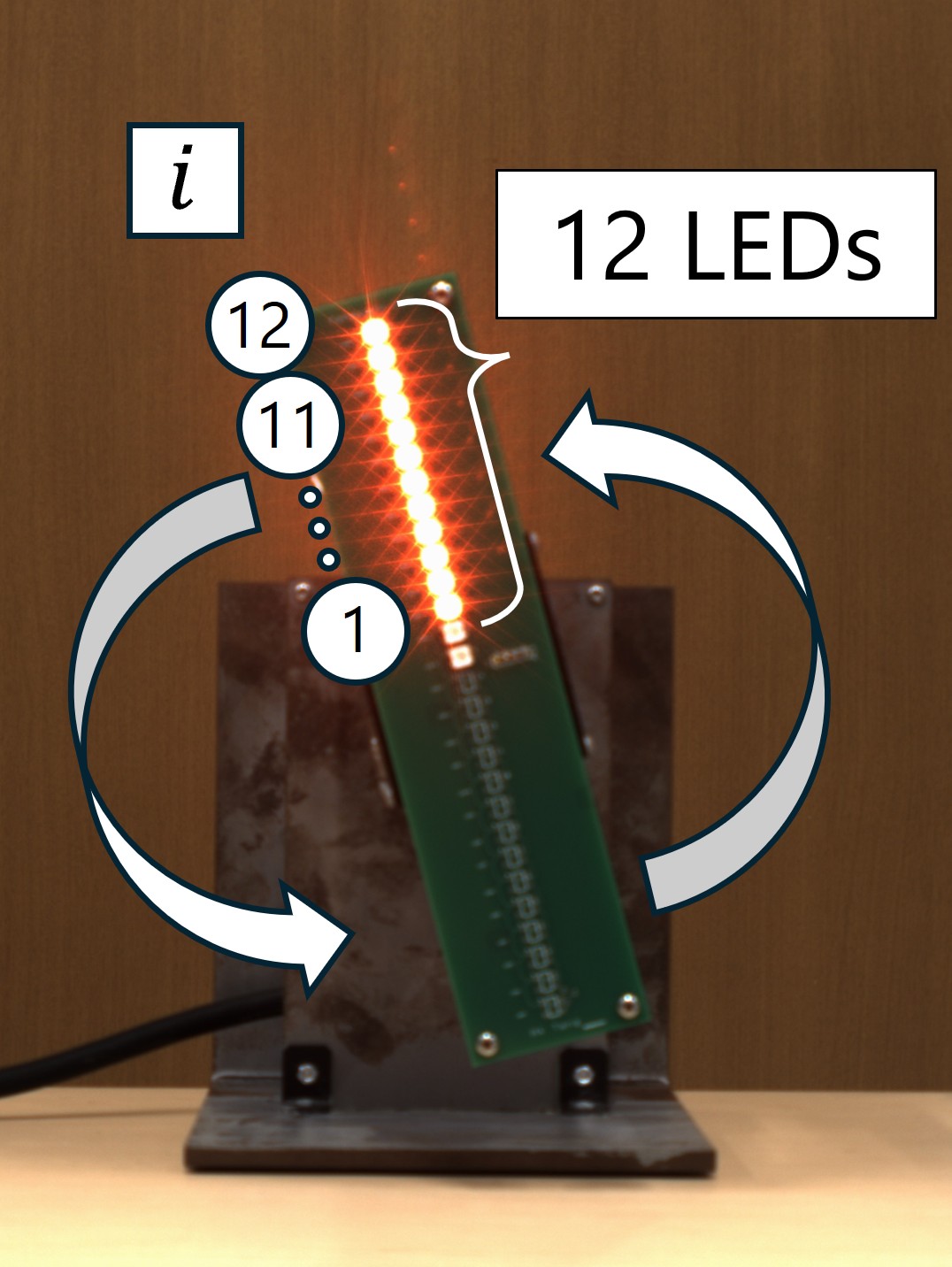}
    }
    \hfill
    \subfloat[Light Trails.]{
        \includegraphics[width=0.15\textwidth]{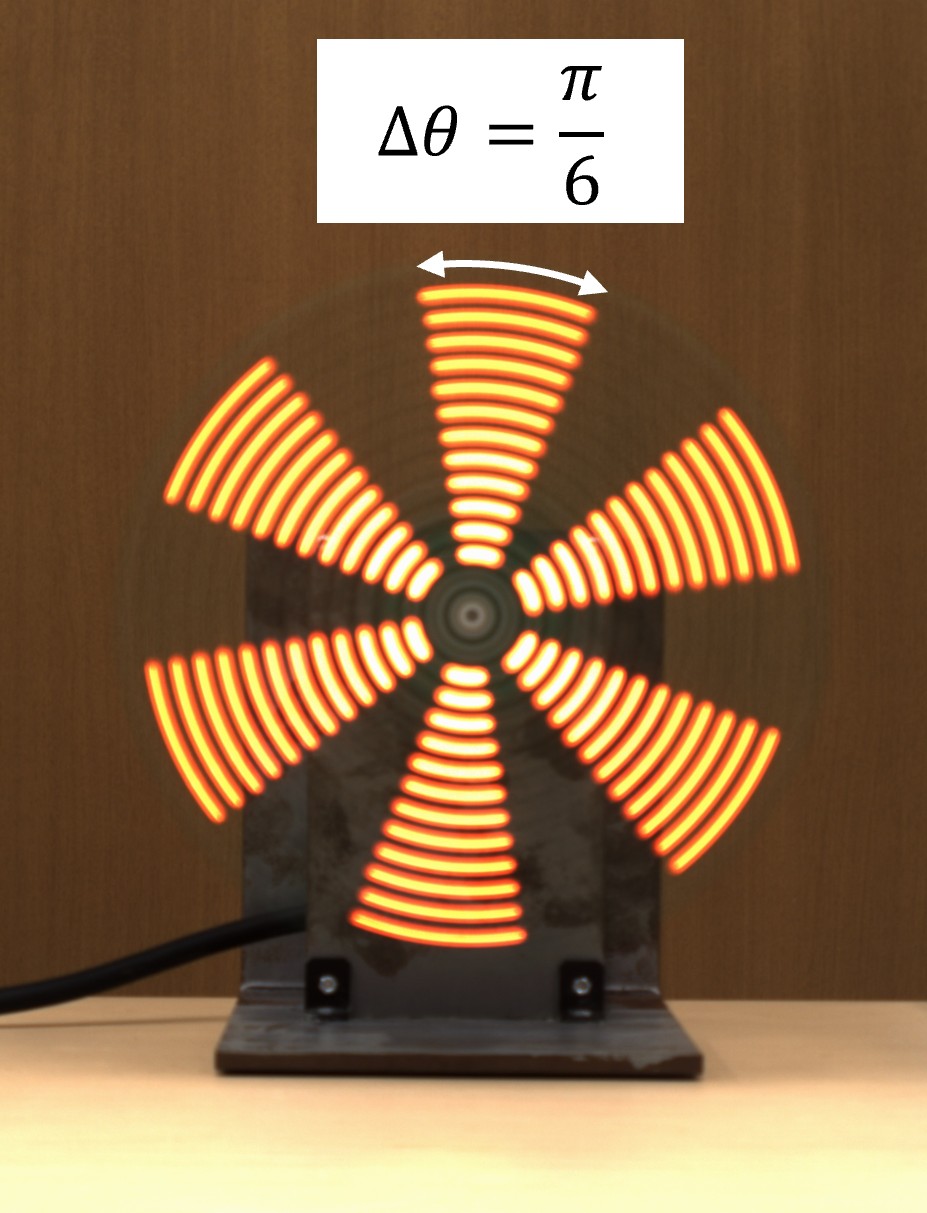}
    }
    \caption{Propeller LED Transmitter (P-Tx).}
    \label{fig:P-Tx}
    \vspace{-5mm}
\end{figure}

 \vspace{-2mm}
\subsection{P-Tx transmitter}
We use a P-Tx, shown in Fig.~\ref{fig:P-Tx}. Twelve LEDs are arranged in a linear array with a spacing of 7-mm.
For the $i$-th LED ($i=1,2,\dots,12$) at radius $r_{i}$ from the rotation axis, the blinking state is controlled at each rotation angle $\Delta\theta_i$ and the information is transmitted using On-Off Keying (OOK). Each full rotation has $J_i=2\pi/\Delta\theta_i$ bits. 
The data transmitted for the $i$-th LED at the angular position $j\in\{0, \ldots,J_{i}-1\}$ is $b^i_j$.

We assume a parallel line-of-sight configuration in which the optical axes of the P-Tx and the camera are aligned. When the P-Tx rotation period is synchronized with the camera exposure time, the sensor integrates a full rotation into a single frame, producing a complete light trail~\cite{tang}.

\vspace{-4mm}
\subsection{Demodulation of the receiver}\label{subsec:montecarlo}
For any LED $i$, the emitted light forms an approximately circular trace in the image plane. We partition this trace into
$J_i$ angular segments and, for each segment, extract a representative pixel value.

Let the image resolution be $X\times Y$ with pixel coordinates $(x,y)$. Denote the centroid of the $j$-th segment for the LED $i$ by $(x^i_j,y^i_j)$. 
We define the segment sample $PV^i_{x_j,y_j}$ of the $j$-th segment by the pixel intensity at $(x^i_j,y^i_j)$. 

The detected bit is obtained by threshold $PV_{\text{th}}^i$ as
\begin{align}
\hat{b}^i_{j} =
\begin{cases}
  1 & \text{if } PV^i_{x_j,y_j} > PV_{\text{th}}^i \\
  0 & \text{if } PV^i_{x_j,y_j} \le PV_{\text{th}}^i.
\end{cases}
\end{align}
\vspace{-5mm}
\section{Single-LED single-blink Light Trail Model}
Since the pixel value is the fundamental observable for ISC demodulation, this section presents a
\textbf{\textit{single-LED
single-blink light trail model}} 
that maps the instantaneous optical transmit power of one LED to pixel values through the LOS channel, defocus blur, and the camera, as illustrated in Fig.~\ref{fig:systemmodel}.
Since we consider a single LED with single-blink, we \textbf{omit the index $i$ and $j$} in this section.

\begin{figure}[t] 
     \centering
     \includegraphics[width= 1.0\linewidth]{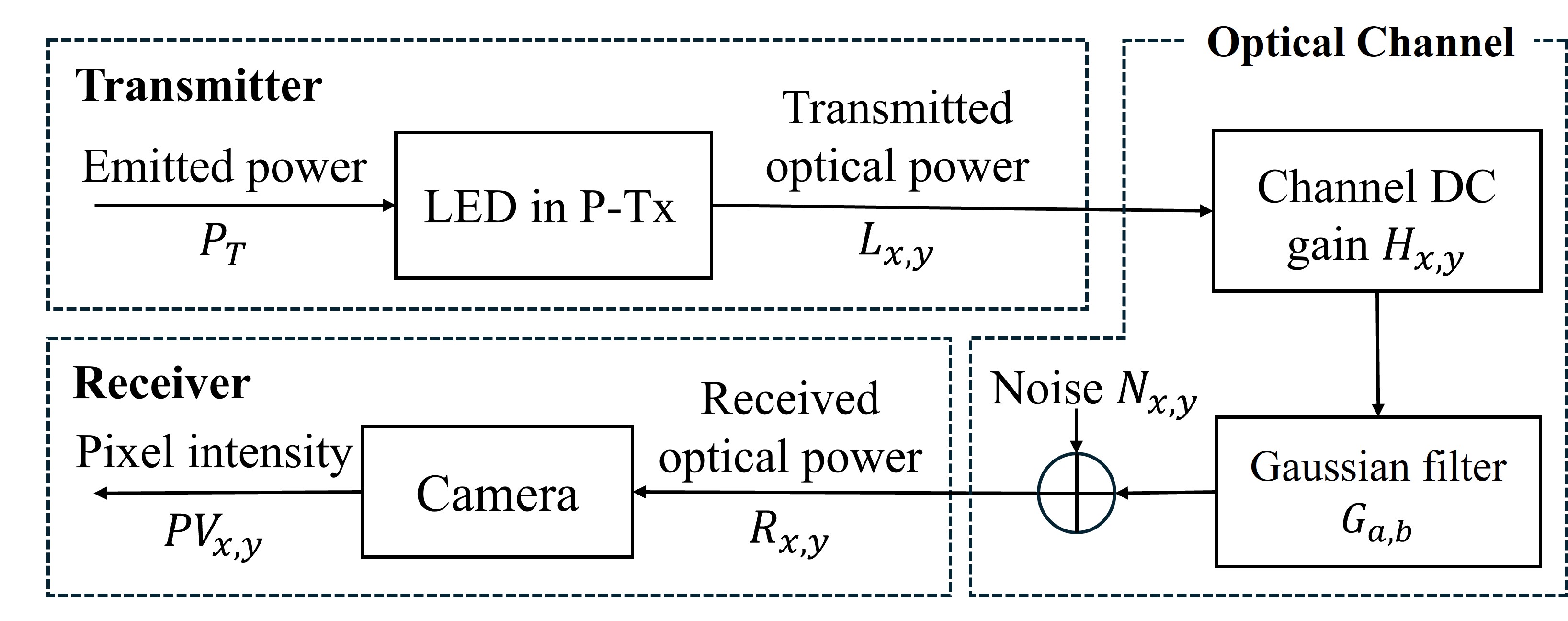}
     \caption{Light trail source–channel–camera pipeline used in the analytical model.}
     \label{fig:systemmodel}
     \vspace{-5mm}
 \end{figure}

\subsubsection{Transmitter}
Let $P_{\text{tot}}$
be the total \textit{emitted power} under constant illumination (i.e., when all $J$ segments transmit “1”). 
If $\tilde J$ out of $J$ segments are active, the emitted (transmitter-side) frame-level power is $P_T \;=\; \tilde J \cdot P_{\text{tot}}/J.$

Let $\mathbf{P}\in\mathbb{R}^{X\times Y}$
denote the (unnormalized) radiometric power distribution of the light trail on the sensor,
with entries $P_{x,y}$ obtained from the luminous-energy accumulation $Q^{\text{blink}}_{x,y}$ along the blinking trail~\cite{asaoka}:
\begin{equation}
P_{x,y} \;=\; \frac{Q^{\text{blink}}_{x,y}}{K\,V(\lambda)}.
\label{eq:Pxy}
\end{equation}
where $K$ is the maximum luminous efficacy coefficient and $V(\lambda)$ is the relative luminous efficiency.

Let $L_{x,y}$ denote the per-pixel incident optical energy at pixel $(x,y)$ during the exposure.
We allocate $P_T$ across pixels proportionally to $\mathbf{P}$ with 
\begin{equation}
L_{x,y} \;=\; P_T\,\frac{P_{x,y}}{\sum_{u,v} P_{u,v}}
\label{eq:Lxy}
\vspace{-2mm}
\end{equation}
which guarantees $\sum_{x,y} L_{x,y}=P_T$.

\subsubsection{Optical channel and defocus blur}
We adopt a distance-dependent LOS channel~\cite{liu}. The DC gain at pixel $(x,y)$ is
\begin{align}
H_{x,y} \;=\; \frac{A_p}{D_{x,y}^{\gamma}}\,R(\psi_{x,y}),
\label{eq:Hxylos}
\vspace{-3mm}
\end{align}
where $D_{x,y}$ is the source–pixel distance, $\gamma$ the path-loss exponent, and $R(\psi)$ is the Lambertian radiant pattern with order $m'$ as
$R(\psi)=\frac{m'+1}{2\pi}\cos^{m'}\!\psi.$

The effective collection area is
\begin{equation}
A_p \;=\;
\begin{cases}
A\,T_s\,g\,\cos\epsilon_{x,y}, & 0\le \epsilon_{x,y}\le \epsilon_\ell,\\
0, & \epsilon_{x,y}>\epsilon_\ell,
\end{cases}
\label{eq:Ap}
\end{equation}
with entrance pupil area $A$, filter transmittance $T_s$, lens gain $g$, angle of incidence $\epsilon_{x,y}$, and lens FoV $\epsilon_\ell$.

Defocus acts as a spatial low-pass filter that grows with distance~\cite{arai3}. 
We model it by a normalized discrete Gaussian kernel $\{G_{a,b}\}$ of size $q\times q$ and variance $\sigma_g^2$:
\begin{equation}
G_{a,b}\;=\;\frac{\exp\!\big(-(a^2+b^2)/(2\sigma_g^2)\big)}{\sum\limits_{u=-(q-1)/2}^{(q-1)/2}\sum\limits_{v=-(q-1)/2}^{(q-1)/2}\exp\!\big(-(u^2+v^2)/(2\sigma_g^2)\big)}.
\label{eq:GaussKernel}
\end{equation}
The received optical power after blur is the 2-D convolution of  per-pixel incident optical energy $L_{x,y}$ with $G$ scaled by $H$:
\begin{equation}
R_{x,y} \;=\; H_{x,y}\sum_{a,b} G_{a,b}\,L_{x-a,y-b}.
\label{eq:Rxy}
\end{equation}
\vspace{-3mm}
\subsubsection{Camera receiver}\label{subsec:receiver}
The photon count at pixel $(x,y)$ is % modeled as
\begin{equation}
I_{x,y} \;=\; \frac{R_{x,y}+N_{x,y}}{Q_p}, 
%\qquad \mu_{x,y}\triangleq \frac{R_{x,y}}{Q_p},
\label{eq:photon}
\vspace{-2mm}
\end{equation}
where $R_{x,y}$ is the expected radiant energy accumulated during exposure and $Q_p$ is the photon energy.
Here $N_{x,y}$ denotes the energy-domain fluctuation due to photon shot and readout noise.
Under the normal approximation to the Poisson term, $N_{x,y}$ is modeled as
$N_{x,y}\;\sim\;\mathcal N\!\big(0,\sigma^2_n\big)$.
% so that in the photon-count domain we have
% \begin{equation}
% I_{x,y}\;\approx\;\mu_{x,y}+n_{x,y}, 
% \qquad n_{x,y}\sim\mathcal N\!\big(0,\;\mu_{x,y}+\sigma_{r}^{2}\big),
% \quad \sigma_{r}^{2}\triangleq \sigma_{E,\mathrm{read}}^{2}/Q_p^{2}.
% \end{equation}

% The photon count at the pixel $(x,y)$ is modeled as~\cite{liu}
% \begin{equation}
% I_{x,y} \;=\; \frac{R_{x,y}+N_{x,y}}{Q_p}, \qquad
% N_{x,y}\sim\mathcal N\!\big(0,\sigma_n^2\big),
% \label{eq:photon}
% \end{equation}
% where $Q_p$ is the photon energy and $\sigma_n^2$ is the variance of the noise from the photon shot.
The camera pipeline (photoelectric conversion, amplification, A/D, gamma) is summarized by $\Phi(\cdot)$, and the pixel value at $(x,y)$ is
\vspace{-1mm}
\begin{align}
\label{eq:Phi}
PV_{x,y} &= \Phi\!\big(I_{x,y}\big) \\
&= t\!\left(\frac{A_{a}A_{c}\!\left(V_{a_{\text{ref}}}-A_{f}\!\left(V_{\text{ref}}-Q_{e}A_{n}I_{x,y}\right)\right)}{\mathrm{raw}_{\max}}\right)^{\!1/g_{a}}, \nonumber
\end{align}
where $t(\cdot)$ denotes clipping to $[0,1]$ and the parameters are listed in Table~\ref{tab:parameters}.%

\vspace{-2mm}

\section{ISI modeling and analytic BER evaluation}
\subsection{Motivation for ISI Model}\label{subsec:histogram_analysis}
Having introduced the single-blink light trail model, we now consider a single LED with multiple blinks. Fig.~\ref{fig:simulation_hist} shows the pixel intensity histogram for $\Delta\theta=\pi/9$ and a 52 m link.
The light trail is produced by a single LED that blinks on/off randomly using OOK (5,000 random bits).
For both symbol classes (0 and 1), the histogram is trimodal, with three peaks aligned with the adjacent-segment states (00), (01/10), and (11).
% ここで、ヒストグラムにおける各分布の広がりは、式8で示した雑音や式7におけるボケの影響、および連続的な光跡を離散的な2次元のピクセル分布にマッピングする際に生じる各光跡セグメントの幾何学的な位置関係誤差に伴う画素値の変動によって生じる。
\rev{The spread of each distribution in the histogram is caused by the blurring effect in Eq.~\eqref{eq:GaussKernel}, the noise shown in Eq.~\eqref{eq:photon}, and pixel value fluctuations associated with geometric positional errors of each light trail segment that arise when mapping continuous light trails onto a discrete 2D pixel distribution.}
This empirical pattern is consistent with inter-symbol interference (ISI) from the two adjacent light trail segments and motivates the ISI model developed below.
\begin{figure}[t] 
\centerline{\includegraphics[width= 0.9\linewidth]{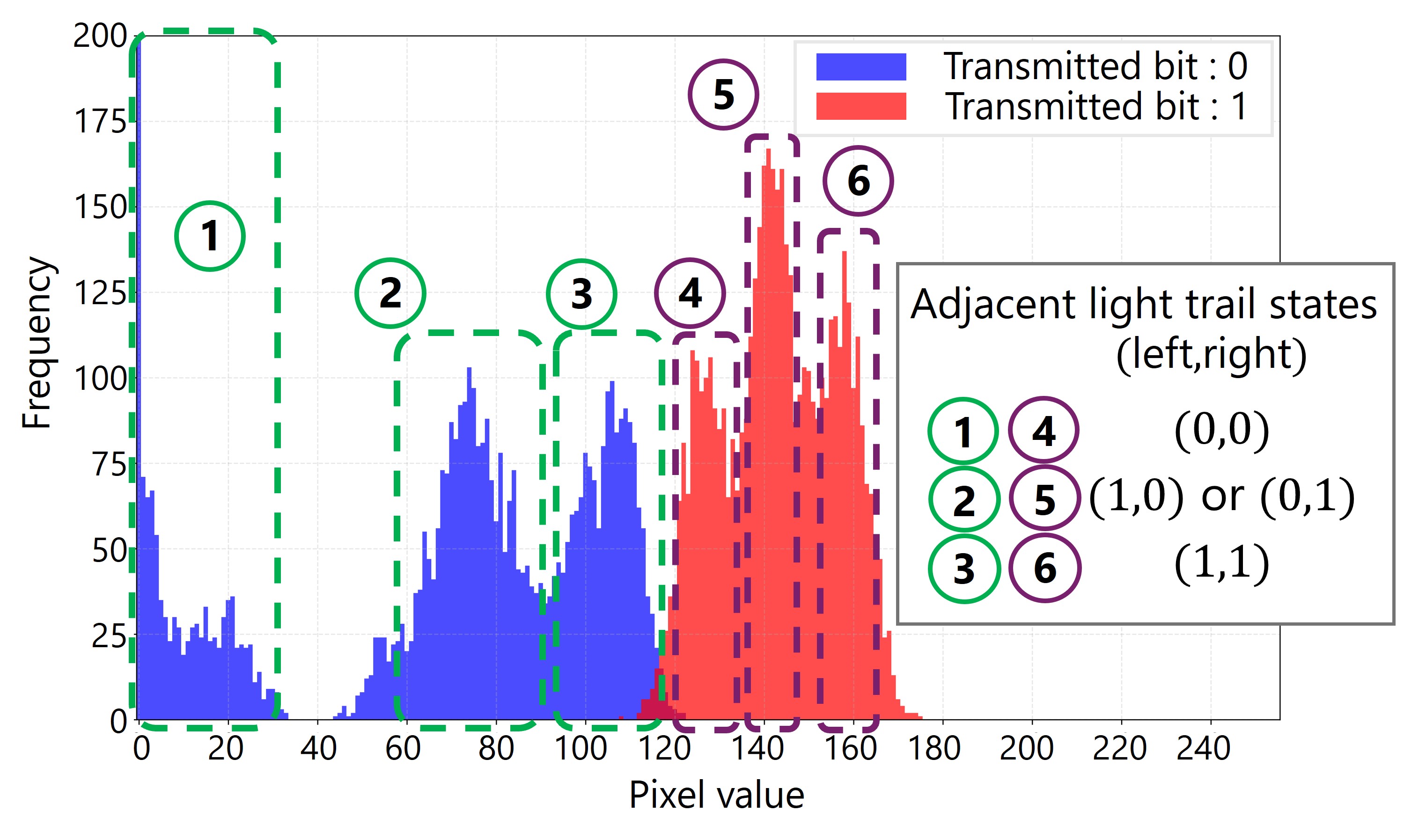}}
     \caption{Histogram of received pixel values in the simulation.}
     \label{fig:simulation_hist}
     \vspace{-5mm}
 \end{figure} 
\vspace{-5mm}

\subsection{Theoretical BER Performance Analysis}\label{subsec:theoretical_analysis}

\subsubsection{BER analysis for interference model}
We consider an interference model in which the received power at $(x_j,y_j)$ is modeled as a superposition of the $j$-th segment and its first neighbors, while the tail energy beyond the first neighbors is negligible:
\vspace{-2mm}
\begin{align}
\rev{R_{x_j,y_j}}
=  b_{j-1}R_{x_j,y_j}^{j-1}
+ b_{j}R_{x_j,y_j}^{j}
+ b_{j+1}R_{x_j,y_j}^{j+1}.
\label{eq:R_total_k1}
\end{align}
where $b_{j}\in \{0,1\}$ is the on/off state of the $j$-th segment and the component powers $R_{x_j,y_j}^{j\pm1}$ are obtained from the physical model in Eq.~\eqref{eq:Rxy} by activating only segment
$j\pm1$ at $(x_j,y_j)$. 
% \rev{Note that, since Eq.~\eqref{eq:Rxy} is derived via 2D convolution over the curved trajectory, this formulation inherently captures the impact of the isotropic 2D Gaussian blur and the geometric asymmetry relative to the pixel grid.}
\rev{Hence, Eq.~(10) already reflects the effect of the isotropic Gaussian blur and the position-dependent geometric variation induced by pixel-grid alignment.}

% \begin{figure}[t] 
%      \centering
%      \includegraphics[width= 0.9\linewidth]{images/R_j.jpg}
%      \caption{Blinking pattern for calculating $R_{x_j,y_j}^j$. }
%      \label{fig:R_j}
%      % \vspace{-6mm}
%  \end{figure}

Following Eq.~(\ref{eq:photon}) in Sec.~\ref{subsec:receiver}, the camera maps the noisy received power to a pixel value via the response function $\Phi(\cdot)$.
Since $\Phi(\cdot)$ is nonlinear, additive power-domain noise becomes signal dependent in the pixel domain.
For small noise (i.e., around a local operating point), we linearize $\Phi(\cdot)$ at $\mu_j \triangleq R^{\text{ISI}}_{x_j,y_j}/Q_p$ using a first-order Taylor expansion as:
\vspace{-3mm}
\begin{align}
PV_{j}^{\text{re}}
&= \Phi\left((\rev{R_{x_j,y_j}} + N_{x_j,y_j})/Q_p\right) \nonumber \\
%&  \approx \Phi\left(\rev{R_{x_j,y_j}}/Q_p\right) + N^{\prime}_{x_j,y_j}(b_{j-1},b_j,b_{j+1}) \\
&\approx \Phi\!\left(\frac{\rev{R_{x_j,y_j}}}{Q_p}\right) +\Phi'\!\left(\frac{\rev{R_{x_j,y_j}}}{Q_p}\right)\frac{N_{x_j,y_j}}{Q_p}. \nonumber
%&\triangleq \PVtot(b_{j-1},b_j,b_{j+1}) + N^{\prime}_{x_j,y_j}(b_{j-1},b_j,b_{j+1}).  \nonumber
\label{eq:PV_recv}    
\end{align}

When $(b_{j-1},b_j,b_{j+1})$ is fixed, $\Phi (\rev{R_{x_j,y_j}})$ is also fixed. 
Therefore, we define averaged value of $PV_{j}^{\text{re}}$ and the equivalent pixel-domain noise as
\vspace{-2mm}
$$
\PVtot(b_{j-1},b_j,b_{j+1}) \triangleq \Phi (\rev{R_{x_j,y_j}}) 
$$
\vspace{-3mm}
$$
N'_{x_j,y_j}(b_{j-1},b_j,b_{j+1})
\;\triangleq\;
\Phi'\!\left(\frac{\rev{R_{x_j,y_j}}}{Q_p}\right)\frac{N_{x_j,y_j}}{Q_p}.
$$

Since $N_{x_j,y_j}\sim\mathcal N(0,\sigma_n^2)$, the term above is zero-mean Gaussian with variance
\vspace{-3mm}
\begin{align}
\sigma_n'^2(b_{j-1},b_j,b_{j+1})
&=\left(\frac{\Phi'\!\left(\rev{R_{x_j,y_j}}/Q_p\right)}{Q_p}\right)^{\!2}\sigma_n^2 .
%\label{eq:sigma_prime}
\end{align}
Hence, the received pixel value is Gaussian around the ideal (noise-free) response:
\begin{align}
&PV_{j}^{\text{re}}(b_{j-1},b_j,b_{j+1}) \nonumber \\
&\sim
\mathcal N\!\Big(\PVtot(b_{j-1},b_j,b_{j+1}),\; \sigma_n'^2(b_{j-1},b_j,b_{j+1})\Big).
\label{eq:PV_gauss}
\end{align}

% In our experiments, the operating point is kept in the mid-range, so the dependence of $\sigma_n'^2(b_{j-1},b_j,b_{j+1})$ on the bit triplet is weak. Unless otherwise stated, we therefore adopt a \textbf{\textit{locally homoscedastic}} approximation and use a single \textbf{$\sigma_n'^2$} per working condition, which we estimate empirically from repeated frames.

\rev{In general, the equivalent pixel-domain noise variance depends on the neighboring-bit pattern through the nonlinear camera response. In our experiments, the received pixel values lie in a mid-range region where the variation of $\Phi'(\mu_j)$ across dominant triplets is small. We therefore use a \textit{locally homoscedastic approximation} with one effective variance $\sigma_n'^2$ per working condition for analytical tractability. For stronger nonlinearity or larger signal excursions, a triplet-dependent variance model would be more appropriate.}

\vspace{-2mm}
\begin{lemma}[Monotonicity $\Rightarrow$ worst-case neighbors]
Let the detector use a fixed threshold $PV_{\text{th}}$ and 
Gaussianized noise with standard deviation $\sigma_n'$ independent of the bits, and nonnegative adjacent contributions $R^{j-1}_{x_j,y_j},R^{j+1}_{x_j,y_j}\ge 0$. 
Define 
$$\mu_i(b_{j-1},b_{j+1}) \triangleq  \PVtot(b_{j-1},i,b_{j+1}), ~~ i=0,1.$$
Then, conditional error probabilities are monotone in each neighbor bit:
$$
P(\hat b_j=e \mid b_{j-1}, b_j=i, b_{j+1})
 = Q\!\Big(\tfrac{PV_{\text{th}}-\mu_i(b_{j-1},b_{j+1}) }{\sigma_n'}\Big) . $$
where $Q(\cdot)$ is strictly decreasing. Hence, the worst-case neighbor patterns are
$(b_{j-1},b_{j+1})=(1,1)$ for $b_j=0$ and $(0,0)$ for $b_j=1$.
\end{lemma}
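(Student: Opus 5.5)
The plan is to prove the lemma in three steps: write each conditional error probability as a Gaussian tail, show the mean $\mu_i$ is nondecreasing in each neighbor bit, and then compose the two monotonicities.

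\emph{Step 1: conditional error probabilities.} Under the locally homoscedastic approximation and Eq.~\eqref{eq:PV_gauss}, conditioned on $(b_{j-1},b_j=i,b_{j+1})$ we have $PV_j^{\text{re}}\sim\mathcal N(\mu_i(b_{j-1},b_{j+1}),\sigma_n'^2)$.

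For $i=0$, an error occurs when $PV_j^{\text{re}}>PV_{\text{th}}$. This gives
\[
P(\hat b_j=1\mid\cdot)=Q\big((PV_{\text{th}}-\mu_0)/\sigma_n'\big).
\]

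For $i=1$, an error occurs when $PV_j^{\text{re}}\le PV_{\text{th}}$. This gives
\[
P(\hat b_j=0\mid\cdot)=Q\big((\mu_1-PV_{\text{th}})/\sigma_n'\big).
\]
I will read the displayed formula in the statement with this sign convention for $i=1$, i.e.\ the argument is the signed distance from the mean to the threshold in the direction of error.

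\emph{Step 2: monotonicity of the means in each neighbor bit.} By Eq.~\eqref{eq:R_total_k1}, $R_{x_j,y_j}$ is affine in $b_{j\pm1}$ with coefficients $R^{j\pm1}_{x_j,y_j}\ge0$. Hence flipping a neighbor bit from $0$ to $1$ can only increase $R_{x_j,y_j}$, and therefore also $I=R/Q_p$.

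Next I must show $\Phi$ is nondecreasing. From Eq.~\eqref{eq:Phi}, the map $I\mapsto A_aA_c(V_{a_{\text{ref}}}-A_f(V_{\text{ref}}-Q_eA_nI))/\mathrm{raw}_{\max}$ is affine with slope $A_aA_cA_fQ_eA_n/\mathrm{raw}_{\max}$. This slope is positive because all gains are positive physical constants. Composing with clipping $t(\cdot)$ to $[0,1]$ and the power $1/g_a$ (with $g_a>0$) preserves monotonicity. So $\mu_i(b_{j-1},b_{j+1})$ is nondecreasing in each neighbor bit, and strictly increasing outside the clipping region whenever $R^{j\pm1}>0$.

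\emph{Step 3: composing the two monotonicities.} $Q$ is strictly decreasing because $Q'(x)=-\phi(x)<0$.

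For $b_j=0$, the error probability $Q((PV_{\text{th}}-\mu_0)/\sigma_n')$ is nondecreasing in $\mu_0$. It is therefore maximized by making $\mu_0$ largest, namely $(b_{j-1},b_{j+1})=(1,1)$.

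For $b_j=1$, the error probability $Q((\mu_1-PV_{\text{th}})/\sigma_n')$ is nonincreasing in $\mu_1$. It is therefore maximized by making $\mu_1$ smallest, namely $(0,0)$.

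Because the dependence is monotone separately in each coordinate, the product order on $\{0,1\}^2$ places $(1,1)$ at the top and $(0,0)$ at the bottom. No pairwise comparison between $(0,1)$ and $(1,0)$ is needed.

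\emph{Main obstacle.} The only delicate point is justifying that $\Phi$ is monotone; everything else is immediate. This rests on the sign conventions of the camera parameters in Table~\ref{tab:parameters}, which I will state as an assumption: positive gains and $g_a>0$. I will also remark that clipping makes the monotonicity weak rather than strict, so ``worst-case'' should be understood as attaining the maximum, possibly with ties. The assumption that $\sigma_n'$ does not depend on the bits is essential. Without it, a change in $\Phi'$ could alter the ordering of the error probabilities, and this is exactly why the lemma is stated under the homoscedastic approximation.
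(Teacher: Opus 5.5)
Your proposal is correct and follows essentially the same route as the paper. You write each class-conditional error as a Gaussian tail, note that flipping a neighbor bit from $0$ to $1$ cannot decrease $\mu_i$, and use that $Q$ is decreasing to get $(1,1)$ as the worst case for $b_j=0$ and $(0,0)$ for $b_j=1$; your reading of the $i=1$ case as $Q((\mu_1-PV_{\text{th}})/\sigma_n')$ is exactly what the paper's proof uses. Your explicit check that $\Phi$ is nondecreasing, and your remark that the inequalities are only weak when $R^{j\pm1}_{x_j,y_j}=0$ or under clipping, tighten two points the paper glosses over: it asserts that the pixel-domain mean rises by $R^{j-1}_{x_j,y_j}$, and it states strict inequalities.
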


\begin{proof}
We show monotonicity in $b_{j-1}$; the argument for $b_{j+1}$ is identical.

\textbf{Case $b_j=0$.}
Fix $b_{j+1}$. When $b_{j-1}$ flips from $0$ to $1$, the conditional mean
$\mu_0$ \textbf{increases }by $R_{x_j,y_j}^{j-1}\ge 0$ while $\sigma_n'^2$ and $PV_{\text{th}}$ stay fixed.
Because $Q(x)$ is strictly decreasing in $x$,
\[
Q\!\Big(\tfrac{PV_{\text{th}}-\mu_0(1,b_{j+1})}{\sigma_n'}\Big)
\;>\; 
Q\!\Big(\tfrac{PV_{\text{th}}-\mu_0(0,b_{j+1})}{\sigma_n'}\Big).
\]
Therefore $P(\hat b_j=1\mid b_{j-1},b_j=0,b_{j+1})$ \emph{\textbf{increases}} with $b_{j-1}$.
By symmetry in $b_{j+1}$, the error is maximized at $(b_{j-1},b_{j+1})=(1,1)$.

\textbf{Case $b_j=1$.}
Fix $b_{j+1}$. When $b_{j-1}$ flips from $0$ to $1$, the conditional mean
$\mu_1$ \textbf{increases }by $R_{x_j,y_j}^{j-1}\ge 0$.
Using the same monotonicity of $Q(\cdot)$,
\[
Q\!\Big(\tfrac{\mu_1(1,b_{j+1})-PV_{\text{th}}}{\sigma_n'}\Big)
\;<\; 
Q\!\Big(\tfrac{\mu_1(0,b_{j+1})-PV_{\text{th}}}{\sigma_n'}\Big).
\]
Hence $P(\hat b_j=0\mid b_{j-1},b_j=1,b_{j+1})$ \emph{\textbf{decreases}} with $b_{j-1}$.
By symmetry, the error is maximized at $(b_{j-1},b_{j+1})=(0,0)$.

Combining both cases yields the stated worst-case neighbor patterns.
\end{proof}

\vspace{-3mm}
\begin{remark*}[\textbf{The midpoint threshold}]
The ``worst-neighbor'' ordering and the resulting bounds require a fixed threshold, a location-family statistic $X=\mu(b_{j-1},b{j}, b_{j+1})+n$ with
nonnegative neighbor shifts (flipping $0\!\to\!1$ does not decrease the mean), and noise with a
monotonically decreasing tail.
With approximately equal variances for $b_j\in\{0,1\}$ and equal priors, the Bayes threshold for the hardest
pair $(1,0,1)$ vs.\ $(0,1,0)$ is the midpoint 
\begin{equation}
PV_{\text{th}}
=\tfrac{1}{2}\,\PVtot(1,0,1) +\tfrac{1}{2}\,\PVtot(0,1,0).
\label{eq:PV_threshold}
\end{equation}
Otherwise, use the LLR-root threshold.
\rev{We adopt a single-variance approximation, leading naturally to the midpoint threshold as a practical decision boundary. This simplification provides analytical tractability. A detailed performance-gap quantification is left for future work.}
\end{remark*}
% \begin{remark*}[\textbf{General rule under unequal variances}]
% Consider a location-family statistic $X=\mu(b_{j-1},b_{j+1})+n$ with nonnegative neighbor shifts
% (flipping $0\!\to\!1$ does not decrease the mean) and noise with a monotonically decreasing tail
% (log-concave MLR family, e.g., Gaussian/Laplace). Under these mild conditions, the Bayes decision
% for the hardest pair $(1,0,1)$ vs.\ $(0,1,0)$ uses the \emph{LLR-root threshold} $PV_{\text{th}}$,
% defined as the solution to $\Lambda(\eta)=0$ with $\Lambda(x)\!\triangleq\!\log f_1(x)-\log f_0(x)$.
% For Gaussian $n$ with unequal variances $\sigma_0^2\neq\sigma_1^2$, $\eta^\star$ solves
% \[
% \frac{(x-\mu_1)^2}{2\sigma_1^2}-\frac{(x-\mu_0)^2}{2\sigma_0^2}+\log\!\frac{\sigma_1}{\sigma_0}=0.
% \]
% \end{remark*}

\begin{theorem}[Analytic per-segment BER under adjacent-only ISI with worst-case midpoint threshold]
\label{thm:adjacent-ber}
Consider the per-segment decision at $(x_j,y_j)$ under the adjacent-only ISI model with nonnegative taps and
Gaussianized noise $n\sim\mathcal N(0,\sigma_n'^2)$ independent of the bits. Using the fixed threshold
$PV_{\text{th}}$ in \eqref{eq:PV_threshold}, 
the conditional decision statistic for a given triplet
$(b_{j-1},b_j=i,b_{j+1})$ is Gaussian with mean
$\PVtot(b_{j-1},i,b_{j+1})$
and variance $\sigma_n'^2$. The conditional error probability is
\begin{align}
P(\hat{b}_j \not= i|b_{j-1}, i,b_{j+1}) = 
Q\!\left(
\frac{\,|PV_{\text{th}}-\PVtot(b_{j-1},i,b_{j+1})|\,}{\sigma_n'}
\right).
\label{eq:cond_err}
\end{align}
\end{theorem}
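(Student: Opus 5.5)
The plan is to condition on the triplet and reduce the segment decision to a one-dimensional Gaussian tail computation. Everything needed is already in place from the linearization leading to \eqref{eq:PV_gauss} and the homoscedastic approximation.

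\textbf{Step 1: the conditional statistic.} Fix $(b_{j-1},b_j=i,b_{j+1})$. By the adjacent-only model \eqref{eq:R_total_k1}, the noise-free received power $R_{x_j,y_j}$ is then a deterministic quantity. The first-order Taylor expansion of $\Phi$ therefore gives
\[
PV_j^{\text{re}}=\PVtot(b_{j-1},i,b_{j+1})+N'_{x_j,y_j},
\]
where $N'_{x_j,y_j}$ is a deterministic scalar multiple of $N_{x_j,y_j}\sim\mathcal N(0,\sigma_n^2)$. It is thus zero-mean Gaussian. Under the locally homoscedastic approximation its variance is the single value $\sigma_n'^2$, which does not depend on the bits. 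This yields the first claim, the Gaussian law stated in the theorem, directly from \eqref{eq:PV_gauss}.

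\textbf{Step 2: the two error events.} I split on $i$ using the detector rule of Sec.~\ref{subsec:montecarlo} with the fixed threshold \eqref{eq:PV_threshold}.
\begin{itemize}
\item For $i=0$, an error is the event $PV_j^{\text{re}}>PV_{\text{th}}$. Standardizing gives
\[
P=Q\bigl((PV_{\text{th}}-\PVtot(b_{j-1},0,b_{j+1}))/\sigma_n'\bigr).
\]
\item For $i=1$, an error is the event $PV_j^{\text{re}}\le PV_{\text{th}}$. Using the symmetry $1-\Phi_{\mathcal N}(z)=Q(-z)$, this equals
\[
Q\bigl((\PVtot(b_{j-1},1,b_{j+1})-PV_{\text{th}})/\sigma_n'\bigr).
\]
\end{itemize}
These are the expressions already appearing in the Lemma.

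\textbf{Step 3: replacing signed arguments by absolute values (main obstacle).} To obtain \eqref{eq:cond_err}, I need the signs to be correct: $\PVtot(\cdot,0,\cdot)\le PV_{\text{th}}\le\PVtot(\cdot,1,\cdot)$ for all neighbor patterns.

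The plan is to use the monotonicity from the Lemma. The nonnegative taps, together with $\Phi$ being nondecreasing (it is increasing before clipping), imply that $\PVtot$ is nondecreasing in each bit. Hence every $b_j=0$ mean is at most $\PVtot(1,0,1)$, and every $b_j=1$ mean is at least $\PVtot(0,1,0)$. Since the midpoint lies between these two hardest-pair values, the required ordering follows provided the eye is open, i.e.
\[
\PVtot(1,0,1)\le\PVtot(0,1,0).
\]
This amounts to $R^j_{x_j,y_j}\ge R^{j-1}_{x_j,y_j}+R^{j+1}_{x_j,y_j}$ in the power domain. I would state this as the operating condition implicit in the theorem, namely the sufficiency condition under which adjacent-only ISI is meaningful.

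Under this condition the signed arguments equal the absolute values, and both cases merge into \eqref{eq:cond_err}. If the eye is closed, the same formula with the signed argument still holds, but the absolute-value form would understate the error; I would flag this explicitly.
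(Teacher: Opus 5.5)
Your proposal is correct and takes essentially the paper's route. The paper gives no separate proof: the theorem is assembled from the linearized Gaussian law \eqref{eq:PV_gauss} with a single homoscedastic $\sigma_n'$, the one-sided tail probabilities of the Lemma, and the Lemma's monotonicity together with the midpoint threshold \eqref{eq:PV_threshold} to fix the signs.

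Your eye-opening condition $\PVtot(1,0,1)\le\PVtot(0,1,0)$ is exactly the hypothesis the paper leaves implicit but needs for the absolute-value form. Without it, the triplet $(1,0,1)$ has error probability above $1/2$, which \eqref{eq:cond_err} cannot express.

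One small slip: for $i=1$ the identity you need is $\Phi_{\mathcal N}(z)=Q(-z)$ (equivalently $1-Q(z)=Q(-z)$), not $1-\Phi_{\mathcal N}(z)=Q(-z)$. The expression you obtain is nonetheless correct.
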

\vspace{-3mm}
In our system, a light trail forms a continuous arc in physical space but is sampled on the discrete pixel grid of the image sensor. This pixel-grid sampling induces a geometric asymmetry that depends on the relative alignment between the trail center and the pixel lattice. As a result, the shape and area of the pixel region covered by a given segment can vary slightly with position, introducing a mild position-dependent bias in the pixel value at each segment centroid. To mitigate this discretization bias, we evaluate the overall BER as the average of the per-segment BERs across all $J$ segments.

% Marginalizing the conditional error in \eqref{eq:cond_err} over the priors of the current bit and its two neighbors across $J$ segments yields the following general BER expression:

Marginalizing the conditional error in \eqref{eq:cond_err} over the priors of the current bit and its two neighbors, \rev{and averaging over all segments $j$ to capture residual discretization differences,} yields the following BER expression:
\begin{align}
\label{eq:ber_general}
\mathrm{BER} &=  
\frac{1}{J}\sum_{j=0}^{J-1}\;
\sum_{i\in\{0,1\}}\pi_i\;
\sum_{(b_{j-1},b_{j+1})\in\{0,1\}^2}\!\!  \\
& \pi_{\text{nb}}(b_{j-1},b_{j+1})\;Q\!\left(
\frac{\big|\,PV_{\text{th}}-\PVtot(b_{j-1},i,b_{j+1})\,\big|}
{\sigma_n'}
\right),   \nonumber 
\end{align}
where $\pi_i = P(b_j = i)$ and $\pi_{\text{nb}}(b_{j-1}, b_{j+1}) = P(b_{j-1}, b_{j+1})$.

\begin{remark*}
\rev{\textbf{(When is adjacent-only ISI enough?)}}\\
\rev{Adjacent-only ISI is adequate when the \emph{leakage ratio}
$$\Lambda_2 \triangleq
\sum_{|m|\ge 2} R^{j+m}_{x_j,y_j}/\sum_{|m|=1} R^{j+m}_{x_j,y_j}$$
satisfies \(\Lambda_2 \le \varepsilon\) for a small tolerance (e.g., \(\varepsilon=0.1\)).
This means that the energy leaked from non-adjacent trails is small compared with that from the immediate neighbors.}

\rev{
In practice, \(\Lambda_2\) stays small whenever the \emph{tangential spacing}
\(S=r_i\Delta\theta\) is sufficiently larger than the effective blur scale \(\sigma_{\mathrm{eff}}\).
A convenient rule of thumb is
\(S/\sigma_{\mathrm{eff}} \gtrsim 1.5\),
under which the energy contribution from non-adjacent segments becomes negligible compared with that from adjacent ones. This condition typically corresponds to the regime where (i) the per-bit angular spacing is not too small (e.g., \(r_i\Delta\theta \gtrsim \tfrac{1}{2}r_{\text{LED}}\)), and (ii) the defocus blur does not spread appreciably beyond the nearest trails. If these conditions are violated (small \(\Delta\theta\), large distance, or heavy blur), a generalized \(K\)-neighbor model should be used.
}
\end{remark*}

% Two geometry/optics-based conditions help keep $\Lambda_2$ small:
% (i) the per-bit angular spacing is not too small (e.g., $r_i\Delta\theta \gtrsim \tfrac{1}{2}r_{\text{LED}}$), and
% (ii) defocus is moderate relative to trail spacing (for the Gaussian kernel in \eqref{eq:GaussKernel}, $3\sigma_g \lesssim r_i\Delta\theta$, so the effective support does not reach next-nearest trails).
% If either condition is violated, adopt a generalized $K$-neighbor model.

% \begin{remark}[$K$-neighbor interference model]
% In the following conditions, interference from non-adjacent light trails becomes non-negligible and the assumption that interference from adjacent light trails is the dominant factor may no longer be valid. When these conditions occur individually or in combination, a generalized $K$-neighbor interference model is necessary.
% \begin{enumerate}
% \item Extremely small control angle per bit $\Delta\theta$. If the center-to-center arc length between adjacent trails, $r_{i}\Delta\theta$, is less than half the LED-chip radius $r_{\text{LED}}$ (i.e., $r_{i}\Delta\theta \leq \frac{1}{2}r_{\text{LED}}$), contributions from the next-nearest segments must be considered.
% \item Significant defocus blur. A larger spatial low-pass filter
% the size $q$ of the spatial low-pass filter in Eq.~\eqref{eq:G2} spreads energy over more pixels, allowing non-adjacent trails to influence the target pixel.
% \item Long communication distances. Multiple light trails may project onto a single pixel, aggregating contributions beyond the immediate neighbors.
% \end{enumerate}
% \end{remark}

\begin{table}[t]
\centering 
\vspace{-1mm}
\caption{Simulation parameters.}
\label{tab:parameters}
\begin{tabular}{c|c}
\hline
\multicolumn{2}{c}{\textbf{Transmitter}} \\
% \hline
% LED specifications & ws2812b \\
\hline
Radius of LED chip ($r_{\text{LED}}$)  & $2.0\times 10^{-3}$ m \\
\hline
Rotation radius ($r_{i}$/m)  & $17.5\times 10^{-3}$ - $94.5\times 10^{-3}$\\
\hline
Control angle per bit ($\Delta\theta$) & $ \frac{\pi}{29}$ - $\frac{\pi}{4}$ rad \\
\hline
Total LED transmission power ($P_{\text{tot}}$) & 0.2 W \\
% \hline
% Rotation speed of P-Tx & $6\pi$ rad/s \\
%\hline
%Blinking pattern ($B$) & alternately \\
\hline
\multicolumn{2}{c}{\textbf{Channel}} \\
\hline
Transmission distance ($D$) & 46.0 - 62.0 m \\
\hline
Optical path loss coefficient ($\gamma$) & 2.0 \\
\hline
Transmittance of the optical filter ($T_s$) & 0.9 \\
\hline
FOV ($\epsilon_l$), Lens gain ($g$) & 15°, 1.0  \\
\hline
Order of the Lambertian emission ($m'$) & 1.0  \\
\hline
S.D. of Gaussian filter ($\sigma_g$) & 1.0-1.5\\
\hline
Filter Size ($q$) & 5 \\
\hline
S.D. of camera noise ($\sigma_n'$) & 4.065 \\
\hline
\multicolumn{2}{c}{\textbf{Receiver}} \\
% \hline
% Camera & U3-3890SE-C-HQ \\
% \hline
% Camera frame rate & 3 fps \\
% \hline
% Exposure time & 333,320 $\mu$s \\
% \hline
% Aperture & F8 \\
\hline
Image sensor resolution & $4000 \times 3000$ pixels \\
\hline
Image sensor pixel pitch ($\rho$) & $1.85 \times 10^{-6}$ m \\
\hline
Camera focal length ($f$) & $30\times 10^{-3}$ m \\
\hline
Quantum efficiency ($Q_e$) & 0.5 \\
\hline
Reference voltage ($V_{\text{ref}}$) & 3.1 V \\
\hline
Sense node gain ($A_n$) & $2.8 \times 10^{-4}$ V/e \\
\hline
Source follower gain ($A_f$) & 1.0 \\
\hline
Maximum ADC voltage ($V_{\text{a\_ref}}$) & 2.5 V \\
\hline
ADC gain ($A_a$), CDS gain ($A_c$) & 1.0, 1.0  \\
\hline
Maximum RAW value ($raw_{\text{max}}$) & 4095 \\
\hline
Gamma value ($g_a$) & 2.2 \\
\hline
\end{tabular}
\vspace{-4mm}
\end{table}

\vspace{-5mm}
\section{Simulation Results}
\subsection{Experimental Setup and Parameter Estimation}
Table~\ref{tab:parameters} lists the simulation/experimental settings.
Per distance, the pixel-domain noise level is estimated from real images via \emph{maximum-brightness extraction}~\cite{liu}: we record 557 frames under a fixed blink, take the framewise maxima, apply GAT+whitening, and use their variance as $\sigma_n'^2$.
Fig.~\ref{fig:simulation_hist} indicates near-homoscedasticity across $b_j\!\in\!\{0,1\}$, so we use a single $\sigma_n'^2$ per distance.
\rev{The operating points lie in the mid-range linear region of the camera response, ensuring the validity of the local linearization and near-homoscedastic approximation.} Blur parameters were fit by image–simulation matching.
\vspace{-4mm}

% Fig.~\ref{fig:simulation_hist} indicates near-homoscedasticity across $b_j\!\in\!\{0,1\}$, so we use a single $\sigma_n'^2$ per distance. Blur parameters were fit by image–simulation matching. \rev{We ensure that the system operates within the mid-range linear region of the camera response by adjusting exposure and selecting operating points accordingly.}
% For each distance we least-squares fit a Gaussian Point Spread Function (PSF) to the measured LED-trail radial profiles under the same blink/geometry, yielding $(\sigma_g, q)$.

% Table~\ref{tab:parameters} summarizes the simulation parameters. To estimate the camera noise level in the pixel domain, we applied the \emph{maximum-brightness extraction} procedure to experimentally captured images~\cite{liu}.
% For each link distance, we captured $557$ frames under a fixed blinking pattern. From each frame we extracted the per-frame maximum pixel value, and the resulting sequence of maxima was used to estimate the pixel-domain noise variance, yielding $\sigma_n'$ for that distance. 

% We use “Gaussianized” to indicate that images are variance-stabilized (GAT) and whitened before modeling. After GAT+whitening, the class-conditioned histograms in Fig.~4 show nearly identical standard deviations for $b_j\in\{0,1\}$, so we adopt a single shared pixel-domain noise level $\sigma_n'^2$ per distance.

\subsection{Simulation Results}
\subsubsection{Validation of the Analytic BER Model}
Fig.~\ref{fig:ber} compares the BER against Monte Carlo simulations and experimental measurements as a function of communication distance from $46~$m to $62~$m at $\Delta\theta=\pi/9$.

Our adjacent-only ISI model closely tracks the experimental BER and Monte Carlo BER over all tested distances. 
\rev{Incorporating the next-nearest neighbors (2-neighbor model) and the \textbf{all-segment} model, which considers the entire light trail, do not generally improve accuracy.}
In contrast, the “No ISI theoretical” curve assumes isolated segments with no inter-symbol interference, substantially underestimates the observed BER. 
Overall, the results confirm that adjacent-only ISI dominates the error behavior in this system and is sufficient for accurate BER prediction and link design. \rev{In tested conditions, the numerical results also indicate that the BER difference between the midpoint and the optimal threshold is negligible.}

\begin{figure}[t] 
\vspace{-2mm}
 \centering
 \includegraphics[width= 0.8\linewidth]{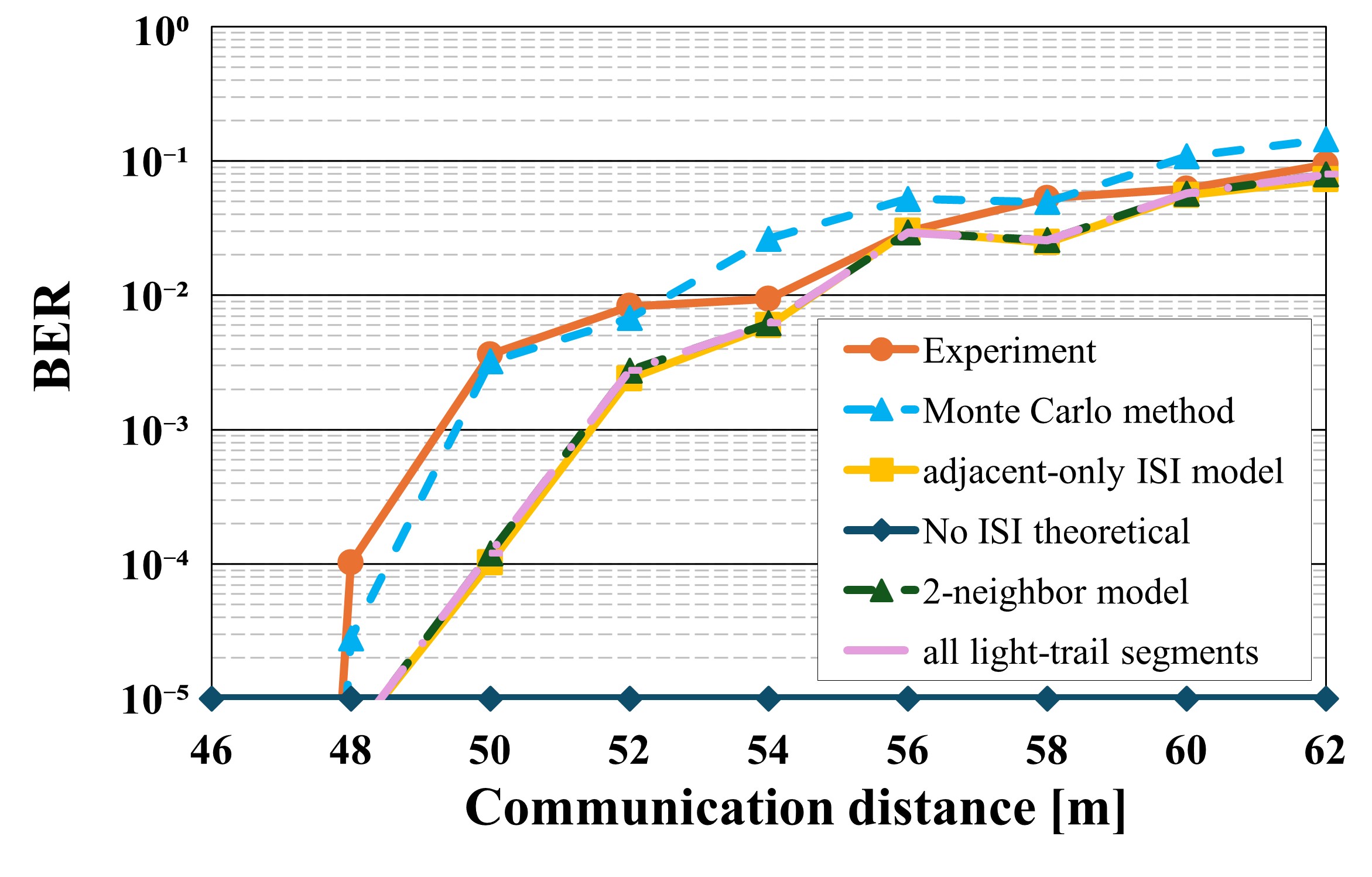}
  \vspace{-4mm}
 \caption{BER vs. communication distance ($\Delta\theta=\pi/9$).}
 \label{fig:ber}
\end{figure} 

\begin{figure}[t] 
 \vspace{-4mm}
 \centering
\includegraphics[width= 0.8\linewidth]{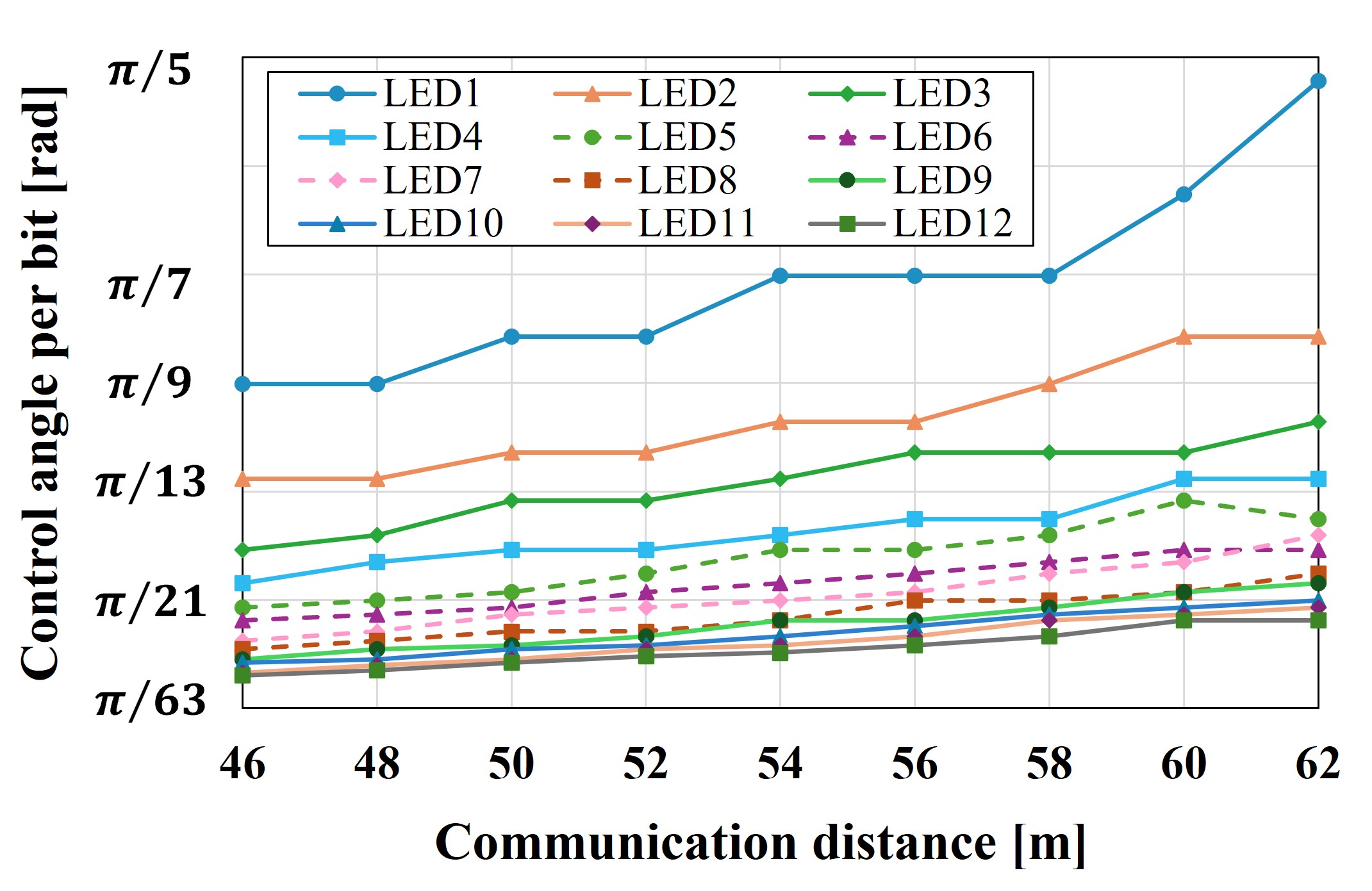}
\vspace{-3mm}
 \caption{Control angle ($\Delta\theta$) versus communication distance.}
 \label{fig:distancevsAngle}
 \vspace{-3mm}
\end{figure} 

\subsubsection{Optimization of control angle}
The control angle $\Delta\theta$ governs the trade-off between the data rate and communication reliability: decreasing $\Delta\theta$ improves the data rate but reduces the spatial resolution per bit, making the signal more susceptible to ISI.
Therefore, we determine the optimal control angle $\Delta\theta$ under an error-free criterion, defined as $\text{BER}\leq 10^{-4}$.
In the optimization, we restrict $J$ to positive even integers ($J=2a$ with $a\in\mathbb{N}$), thus the vertical axis is discretized at $\pi/a$.

% Fig.~\ref{fig:distancevsAngle} illustrates the optimal control angle $\Delta\theta$ across various communication distances for all twelve LEDs in the P-Tx.
% First, for any given LED, the optimal $\Delta\theta$ increases as the communication distance grows.
% This is because at longer distances, the light trail segment corresponding to a single bit covers fewer sensor pixels, which makes the signal more susceptible to ISI.
% Second, at any fixed distance, outer LEDs (with a larger rotation radius) require a smaller $\Delta\theta$.
% This occurs because the physical light trail length per unit rotation angle is longer for outer LEDs, allowing each bit to span more pixels on the sensor and thereby mitigating the impact of ISI.

\rev{Fig.~\ref{fig:distancevsAngle} shows the optimal control angle $\Delta\theta$ versus communication distance for all LEDs. Two trends are clear: (1) For each LED, the optimal $\Delta\theta$ grows with distance, as longer distances reduce pixels per bit, increasing ISI susceptibility. (2) At fixed distance, outer LEDs need smaller $\Delta\theta$ since their longer trails per angle mitigate ISI. In essence, as distance increases, the blur scale grows, and thus the optimal control angle also increases. We summarize this as a scaling insight: $\Delta\theta^\star \propto \frac{\sigma_{\mathrm{eff}}(D)}{r}.$ This gives a clear design rule, linking the observed trend to a geometry–blur tradeoff.}

\subsubsection{Throughput}
\rev{We define throughput as the number of bits received per unit of time under ideal conditions. In this baseline, we assume uncoded transmission, no guard angles, no inter-LED interference, perfect synchronization, and constant rotation speed. With three rotations per unit time, the data rate is $3\cdot2\pi/\Delta\theta$ [bps].
Fig.~\ref{fig:throughput} shows the throughput for different $\Delta\theta$ for the innermost LED ($i=1$) at various communication distances. The resulting $\Delta\theta$ is locally optimal within this model. 
While the current setup provides only a modest throughput, the achievable rate can be significantly increased by using higher-speed motors and higher frame rate image sensors.
Future work will consider coding, guard angles, synchronization errors, and multi-LED effects.}

\begin{figure}[t] 
 \vspace{-4mm}
 \centering
 \includegraphics[width= 0.9\linewidth]{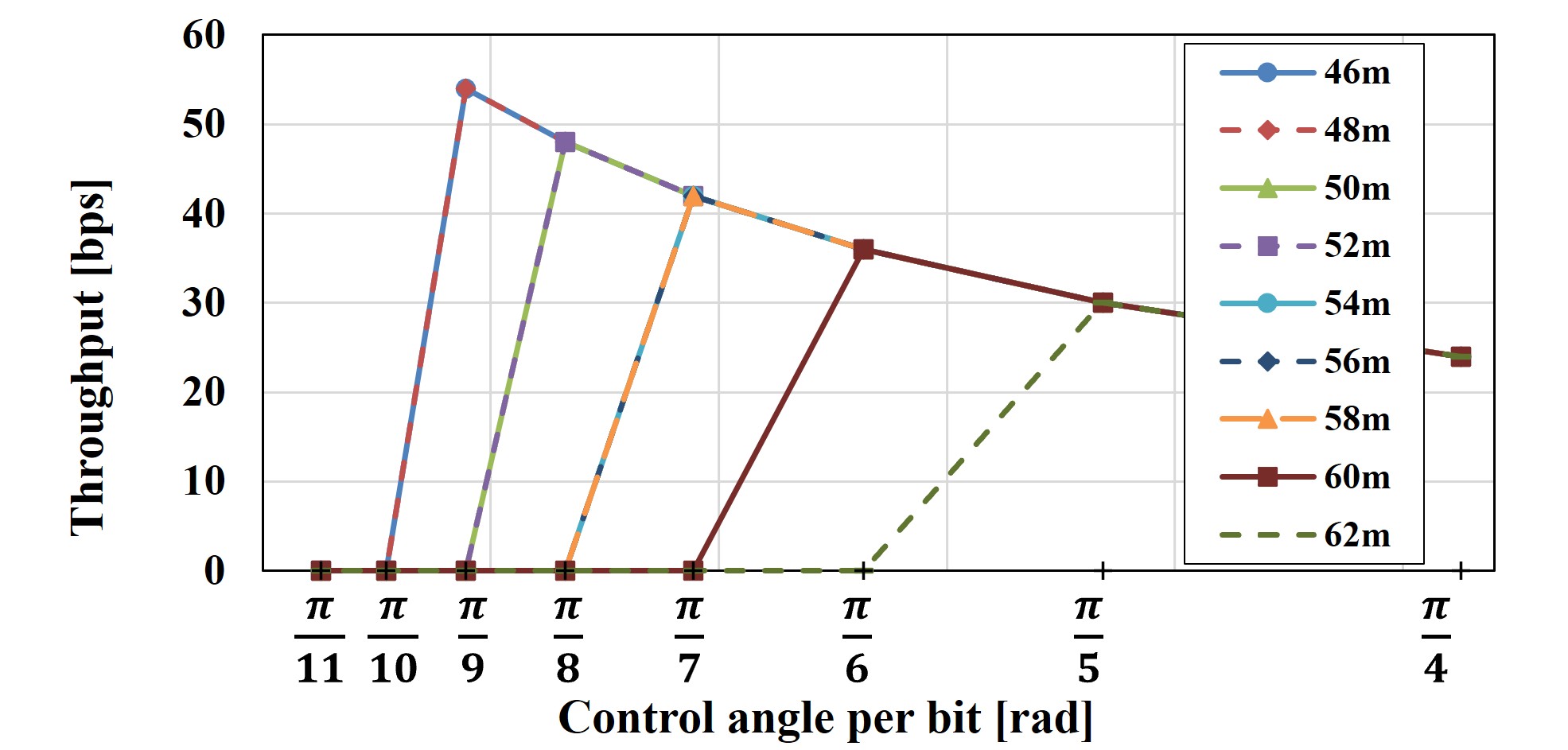}
  \vspace{-2mm}
 \caption{Throughput vs. control angle per bit ($\Delta\theta$).}
 \label{fig:throughput}
 \vspace{-6mm}
\end{figure} 

% 図に$i=1$のLEDにおける各通信距離に対する最適な$\Delta\theta$の値を示す．通信距離が長くなると，光跡が投影されるピクセル数の減少によりISIが生じやすくなり，必要となる制御角度は大きくなる．
% Fig.~\ref{fig:distancevsAngle} shows the control angle 
% $\Delta\theta$ versus communication distance. We observe that the required control angle increases as the distance increases. 

\vspace{-3mm}
\section{Conclusion}
We developed an analytical model for light-trail ISC with a rotating-pattern transmitter. By focusing on adjacent-trail ISI and adopting the hardest-pair midpoint threshold, we obtained a closed-form BER that matches Monte Carlo/measurements across tested settings, indicating non-adjacent trails are negligible for performance prediction. 
The analysis clarifies how distance, blur, and the per-bit control angle $\Delta\theta$ shape BER and provides a practical rule to select $\Delta\theta$. We demonstrated throughput maximization under an error-free target. 
Future work will relax the adjacent-only assumption (larger $K$), cover stronger blur/longer ranges/off-axis geometry, and explore decision rules that retain interpretability and efficiency.

\vspace{-3mm}
\bibliographystyle{IEEEtran}
\bibliography{refs}

\end{document}